\newtheorem{observation}{Observation}
\newcommand{\procSingleTest}{\texttt{\textup{TestPorts}}}
\newcommand{\procBoundDegrees}{\texttt{\textup{BoundDegrees}}}
\newcommand{\procCompareLabels}{\texttt{\textup{CompareLabels}}}
\newcommand{\procRV}{\texttt{\textup{Rendezvous}}}
\newcommand{\returnSuccess}{\texttt{\textup{success}}}
\newcommand{\returnFailure}{\texttt{\textup{failure}}}
\newcommand{\moveRound}[1]{\texttt{\textup{move}}(#1)}
\newcommand{\currentDist}{\texttt{\textup{distance}}}
\newcommand{\agent}{A}
\newcommand{\dist}[2]{\textup{dist}(#1,#2)}
\newcommand{\degree}[1]{d(#1)}
\newcommand{\labelOfAgent}[1][]{\if!#1!\ell\else\ell(#1)\fi}
\newcommand{\labelExtended}[1]{\xi(#1)}
\newcommand{\labelsBit}[2]{#1_{#2}}      
\newcommand{\qed}{\hspace*{\fill}\rule{6pt}{6pt}\vspace{.5\smallskipamount}}
\newtheorem{theorem}{Theorem}[section]
\newtheorem{lemma}{Lemma}[section]
\newtheorem{definition}{Definition}[section]
\newenvironment{proof} { \noindent \emph{Proof}: } { \qed }
\begin{document}

\title{Rendezvous of Distance-aware Mobile Agents \\in Unknown Graphs\thanks{This work was done when the second and the third author were visiting the LIF laboratory in Marseille.
Research partially supported by the Polish National Science Center grant DEC-2011/02/A/ST6/00201, by the ANR project DISPLEXITY (ANR-11-BS02-014) and by the ANR project MACARON (ANR-13-JS02-0002-01).
Dariusz Dereniowski was partially supported by a scholarship for outstanding young researchers funded by the Polish Ministry of Science and Higher Education.}}

\author[1]{Shantanu Das}
\author[2]{Dariusz Dereniowski}
\author[3,4]{Adrian Kosowski}
\author[1]{Przemys\l{}aw Uzna\'{n}ski}

\affil[1]{LIF, Aix Marseille University and CNRS, Marseille, France.}
\affil[2]{Department of Algorithms and System Modeling, Gda\'{n}sk University of Technology, Gda\'{n}sk, Poland.}
\affil[3]{GANG Project, Inria Paris, France.}
\affil[4]{LIAFA, Paris Diderot University and CNRS, France.}

\date{}
\maketitle
\thispagestyle{empty}

\begin{abstract}
We study the problem of rendezvous of two mobile agents starting at distinct locations in an unknown graph.
The agents have distinct labels and walk in synchronous steps. However the graph is unlabelled and the agents have no means of marking the nodes of the graph and cannot communicate with or see each other until they meet at a node.
When the graph is very large we want the time to rendezvous to be independent of the graph size and to depend only on the initial distance between the agents and some local parameters such as the degree of the vertices, and the size of the agent's label. It is well known that even for simple graphs of degree $\Delta$, the rendezvous time can be exponential in $\Delta$ in the worst case.
In this paper, we introduce a new version of the rendezvous problem where the agents are equipped with a device that measures its distance to the other agent after every step. We show that these \emph{distance-aware} agents are able to rendezvous in any unknown graph, in time polynomial in all the local parameters such the degree of the nodes, the initial distance $D$ and the size of the smaller of the two agent labels $l = \min(l_1, l_2)$.  Our algorithm has a time complexity of $O(\Delta(D+\log{l}))$ and we show an almost matching lower bound of $\Omega(\Delta(D+\log{l}/\log{\Delta}))$ on the time complexity of any rendezvous algorithm in our scenario. Further, this lower bound extends existing lower bounds for the general rendezvous problem without distance awareness. \ \\

\noindent {\textbf{Keywords:}} {Mobile Agent, Rendezvous, Synchronous, Anonymous Networks, Distance Oracle, Lower Bounds}

\end{abstract}

\newpage
\pagenumbering{arabic}

\section{Introduction}

\subsection{Overview}
Suppose two friends travel to a distant land and arrive at a city where all road signs are written in a language unknown to either of them. If the friends get separated and can no longer communicate with each-other, how could the two friends get together again without any help from a third person. This problem of gathering two autonomous mobile agents, called the \emph{rendezvous} problem has been studied in many different contexts, for example for two ships lost in the sea, two astronauts that land in separate parts of a planet and so on. Initial studies on the problem were restricted to finding probabilistic strategies for movement of the two agents that minimize the expected time to rendezvous (See \cite{AlpG03} for a survey of such results). In recent years, the deterministic version of the problem has received a lot of attention especially by the distributed computing community. The rendezvous problem for two agents moving along the edges of a graph is a typical problem of symmetry breaking and it is a primitive for distributed coordination among autonomous mobile robots. The solution to the problem depends on the structure of the graph, the capabilities of the agents and the initial knowledge available to the agents. In this paper, we consider the problem for deterministic agents with local vision moving in an initially unknown graph; the problem is solved when the two agents are simultaneously located in the same vertex. We are interested in the worst case time complexity for rendezvous.
In general, rendezvous cannot always be solved deterministically when the underlying graph is highly symmetric and the agents follow identical strategies. A typical example is that of a ring network with unlabelled nodes where the two agents are placed on opposite vertices on any diameter. In this case, the distance between the two agents may never decrease if the agents use identical strategies, moving left or right at the same time.

Known solutions to the rendezvous problem are based on one of the following two approaches. The first type of solutions relies on finding a point of asymmetry in the graph and meeting at a unique point of asymmetry (e.g. such a point of asymmetry always exists in graphs where the nodes are labelled uniquely). The second type of solutions assumes that the agents are provided with distinct labels and thus, they can execute distinct strategies and ensure rendezvous. The former type of results require the agents to traverse every edge of the graph in the worst case and the time to rendezvous depends on the size of the graph. On the other hand, the latter type of solutions allow the agents to rendezvous in graphs of arbitrary size or even infinite graphs when the agents are located a finite distance apart. It has been shown that rendezvous of agents with distinct labels can be achieved in arbitrary finite graphs in time polynomial in the size of the graph and the size of the smaller of the labels assigned to the agents both in the synchronous case~\cite{KowalskiM08,TSZ07} and the asynchronous case~\cite{DieudonnePV2013}. For infinite graphs, the only known results are for very specific graphs such as lines~\cite{Stachowiak} or grids~\cite{BampasGrid2010}. For a $\Delta$-dimensional infinite grid, the optimal time to rendezvous is $\Theta(D^\Delta)$ which is already exponential in the maximum degree of the graph. In fact, in unknown graphs of degree $\Delta$, where the agents start a distance of $D$ apart, an agent may have to visit all vertices at a distance of $D$ from its initial location.  Since there could be $D^{\Delta}$ such vertices, the time cost of rendezvous would be exponential in $\Delta$, even if the agents have complete knowledge of the graph as well as the initial distance between them. Thus the question is what additional capabilities would enable the agents to rendezvous in polynomial time. 

In this paper we are interested in designing the simplest mechanism that can help the agents to rendezvous in a large (possibly infinite) graph in time polynomial in the other parameters of the problem, e.g. the initial distance, the maximum degree of the graph and the labels assigned to the agents. We achieve this by equipping the agents with a device that can measure the distance to the other agent in the graph after each step of the algorithm\footnote{Such a device can be implemented in practice e.g. by emitting specific signals at periodic intervals and measuring the intensity of the signal received from the other device held by the other agent.}. In fact, our algorithm does not require knowledge of the exact distance between the agents, but instead it is sufficient if the agent can detect whether the distance to the other agent increased or decreased after each move. We assume time to be discretized into rounds; in each round an agent can either traverse one edge of the graph or stay at its current location, and at the end of each round the agent can determine whether the distance to the other agent increased, decreased or remained unchanged during this round. Note that the agents have no means of detecting the direction leading to the other agent. We call agents equipped with the above device \emph{distance-aware} agents. We show that distance-aware agents can rendezvous in arbitrary graphs in time polynomial in the initial distance, in the degree of the graph, and in the size of the smaller of the two agent labels.

\subsection{Our Contributions}

We show that two distance-aware agents, starting from an initial distance $D$ apart from each other in a connected graph with maximum degree $\Delta$, can rendezvous in time $O(\Delta \cdot D + \Delta \cdot \log{l})$ rounds, where $l = \min(l_1, l_2)$ and $l_1$ and $l_2$ are the labels of the agents. The proof is constructive and provides a deterministic algorithm for the agent that takes as input the label of the agent. The algorithm does not require any prior knowledge of the graph and works for any connected graph. We also show that our algorithm is almost optimal by providing a lower bound of $\Omega(\Delta (D+\log{l}/\log{\Delta}))$ for rendezvous of distance-aware agents. Thus, our algorithm is asymptotically optimal when the maximum degree $\Delta$ is not extremely large. 

The lower bound presented in this paper holds even for agents that can compute the exact distance to each other at every step, while the algorithm requires only the knowledge of changes in distance. Moreover, this lower bound extends existing lower bounds for the general problem of rendezvous of labelled agents in unknown graphs. In terms of the size of agent labels, the previous lower bound (without distance awareness) was $\Omega(\log{l}\cdot D)$ which already holds for the ring; no generalizations of this lower bound to graphs of arbitrary degree have been presented before. Our results show that the lower bound for rendezvous must be at least $\Omega(\log{l}(D+\Delta/\log{\Delta}))$. 

\subsection{Related Work}

This paper considers the deterministic version of rendezvous (for randomized solutions see e.g. \cite{AlpG03}).
The problem of rendezvous of autonomous mobile agents has been studied for agents moving in a discrete space i.e. a graph~\cite{PelcSurvey} or those moving on a continuous space (e.g. two dimensional plane \cite{robot05}). In the graph setting, rendezvous of \emph{identical} agents is possible only if the graph is asymmetric or the agents are placed in asymmetric positions on the graph. There exists a characterization of such instances (graphs and initial positions of agents) where rendezvous is solvable \cite{YKsolvable}.
If the agents are asynchronous, they can take advantage of the asymmetry in their initial positions by marking their initial position by a pebble~\cite{BasG01,kkmbook}.
On the other hand, if the agents have distinct labels then rendezvous is possible in any graph and any starting positions, without the need to mark nodes. 
The first deterministic synchronous rendezvous algorithm for agents with distinct labels was presented by Dessmark et al. \cite{DessmarkFKP06} and the time complexity of the algorithm was $O(n^5 \sqrt{\tau \log l} \log{n} + n^{10} \log^{2}n\log{l})$ for a graph of $n$ nodes, where $\tau$ is the delay in the starting time of the agents. Subsequent studies~\cite{KowalskiM08,TSZ07} improved this result and removed the parameter $\tau$ from the time complexity allowing for rendezvous in time polynomial in both $n$ and $\log{l}$. For the asynchronous case, De Marco et al. \cite{MarGKKPV06} provided an algorithm for rendezvous with a cost of $O(D\cdot \log{l})$ rounds, when the graph is known. For unknown arbitrary graphs, Czyżowicz et al. \cite{LabourelP10} gave the first algorithm for asynchronous rendezvous but the cost of this algorithm is at least exponential in the distance $D$ and the   degree $\Delta$ of the graph. Recently, Dieudonn\'{e} et al. \cite{DieudonnePV2013} provided an improvement over this result achieving asynchronous rendezvous in time polynomial in $n$ and $\log{l}$.
Rendezvous of agents starting from a finite distance $D$ in an infinite graph has been studied for the special cases when the graph is a line~\cite{MarGKKPV06,Stachowiak} or a grid~\cite{BampasGrid2010}, assuming that the agents have a sense of orientation and they know their own location in the labelled grid.

There have been several studies on the minimum capabilities needed by the agents to solve rendezvous. For example, the minimum memory required by an agent to solve rendezvous is known to be $\Theta(\log{n})$ for arbitrary graphs. Czyżowicz et al. \cite{logspaceRV} have provided a memory optimal algorithm for rendezvous, and there are studies on the tradeoff between time and space requirements for rendezvous~\cite{CKP-tradeoffs}.
In some papers, additional capacities are assumed for the agents to overcome other limitations, e.g. global vision is assumed to overcome memory limitations~\cite{kmp-oblivious} or the capability to mark nodes using tokens~\cite{kkmbook} or whiteboards~\cite{disc2010} is often used to break symmetry.
The model used in this paper can be seen as a special case of the oracle model for computation~\cite{oracle} where the agent is allowed to query an oracle that has global knowledge of the environment. However in our case, since the only queries are distance queries, the oracle can be implemented without complete knowledge of the graph topology.

\section{Model and Notations}

We model the environment as an undirected connected (possibly infinite) graph $G(V,E)$. The nodes of $V$ are unlabelled such that vertices of the same degree look identical to any agent (i.e. the nodes are anonymous). At each node of the graph, the edges incident to it are locally labelled, so that an agent arriving at a node can distinguish among them\footnote{Such an assumption is necessary to allow the agent to navigate in the graph.}. We assume that edges incident to a node $v$ are labelled by distinct integers (called port numbers) from the set $\{1,2,\dots, d(v)\}$, where $\degree{v}$ is the degree of node $v$. The degree of each node is finite and bounded by the parameter $\Delta$ (which is unknown to the agent). For any two distinct vertices $u,v \in V$, the distance between them, denoted by $\dist{u}{v}$, is the number of edges in any shortest path from $u$ to $v$ in $G$.

There are exactly two agents $a_1$ and $a_2$, and each agent $a_i$ has a distinct label $\labelOfAgent[\agent_i]$ $\in$ $\{0,1,\dots, L-1\}$ for some integer $L\geq 1$. An agent knows its own label but not that of the other agent. The agents have no prior knowledge of the graph.
Each agent starts from a distinct node of the graph and moves along the edges of the graph in synchronous steps following a deterministic algorithm. In other words, time is discretized into regular intervals called rounds; in each round, an agent at a node $v$ can either move to an adjacent node of $G$ or remain stationary at $v$. If the agent moves to an adjacent node $w$, the agent becomes aware of the port number of the edge through which it entered $w$.
The agent has no means of marking a node that it visits and the agents cannot communicate with each other. An agent can see the other agent only when both agents are on the same node (in particular the agents do not see each other if they cross on the same edge from opposite directions). The two agents start the algorithm in the same round (called round $0$) and rendezvous is achieved in the earliest round $T$ when the two agents are at the same node. We denote by $D$ the distance between the starting locations of the agents.

Contrary to previous studies on rendezvous, we assume that the agent is equipped with a device that measures the distance to the other agent. An agent at a node $v$ in round $t$ can make a query to this device (modelled as a function call \currentDist()) which returns the value $\dist{v}{u}$, where node $u$ is the location of the other agent in this round. In each round the agent can make one call to \currentDist() and depending on the value returned, the value of the agent's label, the current state of the agent, and the degree of the current node, the agent chooses a number between $0$ and $\degree{v}$ and leaves the current node $v$ through this port. We assume that the port number $0$ corresponds to a self-loop at node $v$ and if the agent chooses $0$ it remains at the same node $v$.
In this paper we do not restrict the memory of an agent in any way. Thus, the agent can memorize its complete history of moves up to the current round and store this as its internal state.


\section{Lower Bound for Distance-aware Rendezvous}

In this section we provide lower bounds on the rendezvous time for distance-aware agents. Observe that a trivial lower bound is $\Omega(D)$ since at least one of the two agents must traverse $D/2$ edges to achieve rendezvous. We could easily obtain a better lower bound for graphs of degree $\Delta$. The result below is folklore and we include it only for completeness.

\begin{lemma}
Rendezvous of two agents that are initially at a distance of $D$ apart in an unknown graph of maximum degree $\Delta$ requires $\Omega(D \cdot \Delta)$ rounds in the worst case.
\end{lemma}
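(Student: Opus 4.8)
The plan is to construct an adversarial graph on which the two agents, even with full distance awareness, cannot meet in fewer than $\Omega(D\cdot\Delta)$ rounds. The natural candidate is a graph built from a long path of length roughly $D$ whose two endpoints (where the agents start) are attached to a ``bundle'' of $\Delta$ parallel edges, or more robustly, a cycle-like gadget: take two vertices $u$ and $v$ joined by $\Delta$ internally disjoint paths each of length $D$. The agents start at $u$ and $v$ respectively, so the initial distance is $D$. The key point is that the $\Delta$ parallel paths are locally indistinguishable: every internal node has degree $2$, and the only way the distance query gives information is through the value of the current distance, not a direction.

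First I would fix the convention that the adversary controls the port numbering, and argue that as long as an agent has not yet reached the ``far'' vertex (the one it did not start at), its distance readout is consistent with many different choices of which of the $\Delta$ paths it is walking along. Concretely, if the agent at $u$ walks $k$ steps into some path, its distance to the stationary other agent at $v$ is $D-k$ along that path (and also, for $k\le D/2$, equal to $D-k$ by going through $u$ and any other path is longer), so the distance value alone cannot reveal which path was chosen; the adversary can adaptively assign the port numbers so that the agent is effectively forced to explore the paths one at a time. Second, I would make this precise with an adversary argument: run the (deterministic) algorithm of agent $a_1$ while keeping $a_2$ fixed at $v$; each time $a_1$ commits to leaving $u$ via a port it has not tried before, the adversary routes that port into a fresh one of the $\Delta$ paths. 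To reach $v$, agent $a_1$ must traverse one of these length-$D$ paths to completion; but before it does, in the worst case it has entered and retreated from up to $\Delta-1$ of the wrong paths, and crucially each retreat costs it $\Omega(D)$ rounds (it has to walk out as far as it went in), or if it retreats early the adversary simply makes that path a dead-end loop back so that no progress was possible. Summing, agent $a_1$ spends $\Omega(D\cdot\Delta)$ rounds. The symmetric argument bounds $a_2$, and since rendezvous requires at least one agent to reach the other's region, the total time is $\Omega(D\cdot\Delta)$.

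There is a subtlety I should be careful about: because the agents are distance-aware, an agent can tell when its distance stopped decreasing, i.e., it can detect that it has ``hit a wall'' and is on a wrong path. So the gadget cannot just be $\Delta$ simple paths of length $D$ with dead ends — the agent would know after one step that a length-$1$ dead-end path is useless. The fix is to make all $\Delta$ paths have length exactly $D$ and genuinely lead toward $v$ in terms of the distance readout, so that every path looks equally promising: the distance decreases by one with every forward step on any of them, and only upon arriving at $v$ does the agent discover it has met (or not met, if this is the wrong instance) the other agent. Then to distinguish instances the agent is forced to walk a full path to its end. The adversary maintains a family of instances (one per choice of which path is ``the short one'' in a variant where one path has length slightly less, or simply uses the indistinguishability to delay commitment) and shows the algorithm cannot succeed on all of them in $o(D\Delta)$ rounds.

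I expect the main obstacle to be handling the distance-awareness cleanly in the adversary argument — ensuring that the distance values the adversary is forced to report never leak which path the agent is on, and never let the agent shortcut by ``teleporting'' its reasoning about where the other agent must be. This is why the symmetric construction with $\Delta$ equal-length paths between $u$ and $v$ is the right choice: by the symmetry of the gadget, the distance from any point on any path to $v$ is determined solely by how many forward steps were taken, so the sequence of distance readouts along any exploration is a deterministic function of the agent's move sequence, independent of the adversary's port assignment. Once that invariant is established, the counting argument (one agent must fully traverse a path, having in the worst case wasted $\Omega(D)$ rounds on each of $\Omega(\Delta)$ wrong paths) is routine, and combining the two agents' costs — at most one of them ever needs to move far, but the adversary can always pick the instance punishing whichever agent moves first — gives the claimed $\Omega(D\cdot\Delta)$ bound.
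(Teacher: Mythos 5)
Your construction does not force the $\Omega(D\cdot\Delta)$ bound, and the step that fails is precisely the one you flag as ``routine'': the claim that an agent wastes $\Omega(D)$ rounds on each of $\Omega(\Delta)$ wrong paths. In the gadget with $\Delta$ internally disjoint $u$--$v$ paths, \emph{all of the same length} $D$, there are no wrong paths at all: every port at $u$ begins a path that reaches $v$ in exactly $D$ steps, and (against a stationary opponent) the distance readout decreases by one at every forward step on every path. An agent that simply walks $D$ steps through any port reaches the other agent's start vertex, so the gadget cannot cost more than $O(D)$ plus whatever symmetry-breaking overhead remains --- it yields no $\Delta$ factor. Your attempted repairs do not help: if you shorten one path, the distance from a point $k$ steps into any longer path is $\min(D'-k,\,D+k)$, which \emph{increases} for small $k$, so the oracle exposes a wrong path after $O(1)$ steps (giving at best $\Omega(\Delta)$ wasted rounds in total, not $\Omega(D\Delta)$); and if you turn some paths into dead ends, their distance readouts again diverge immediately from those of the true path, destroying the indistinguishability you rely on. There is an inherent tension: either the decoy paths really lead to $v$ (then they are not decoys) or they do not (then distance awareness detects them in $O(1)$ steps, not $\Omega(D)$).

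The paper's argument has the opposite product structure, and that is the idea your proposal is missing. It uses a caterpillar: a spine of $D+1$ vertices, each carrying $\Delta-1$ pendant leaves, with the agents at the two ends of the spine. The point is a \emph{local, per-step} adversary argument: from any spine vertex at most one incident edge decreases the distance to the other agent, a probe of a leaf costs only $O(1)$ rounds but yields no progress, and the adversary orders the ports so that all $\Theta(\Delta)$ useless ports are tried before the unique productive one. Hence every unit of distance reduction costs $\Theta(\Delta)$ rounds, and this must be paid $\Theta(D)$ times. The lower bound is (progress steps) $\times$ (cost per step of progress), not (number of long paths) $\times$ (path length). To fix your write-up you would essentially have to abandon the parallel-paths gadget and move the ``$\Delta$-way branching with cheap dead ends'' to every vertex along the route, which is exactly the caterpillar.
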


\begin{proof}
Consider the caterpillar graph shown in Figure~\ref{fig:lowerbound1} obtained by taking a line of $D+1$ nodes and replacing each node with a star of size $\Delta-1$. Suppose that the two agents start at the endpoints of the path of length $D$ in this graph, as shown. If an agent traverses an edge leading to one of the leaves, it has no other option than to return and try another port. For any deterministic algorithm, an adversary can assign the port numbers in such a way that the agent needs to traverse all the $\Delta$ incident edges before it gets any closer to the other agent. Thus after $2 \Delta-1$ rounds the agents can reduce the distance between them by $2$. Using a repetition of this argument, the result follows.
\end{proof}

\begin{figure}[htb]
\includegraphics[width=\textwidth]{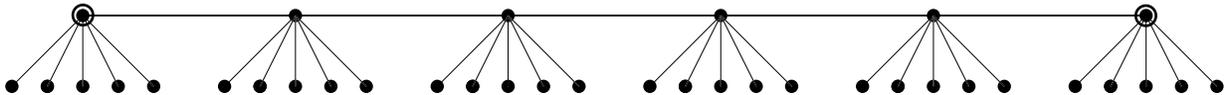}
\caption{ \small{The two agents are located at the specially marked nodes at the two ends of a Caterpillar graph.}
} \label{fig:lowerbound1}
\end{figure}

The above lower bound is independent of the agent's labels. We now present a bound which includes the parameter $L$ which bounds the size of the set of possible agent labels. It is known that rendezvous of two agents with distinct labels requires $\Omega(\log{L})$ rounds~\cite{DessmarkFKP06}. However this lower bound is for the simplest graph consisting of two nodes and a single edge connecting them. We provide below a better lower bound for arbitrary graphs of maximum degree $\Delta$.

\begin{figure}[tb]
\includegraphics[width=\textwidth]{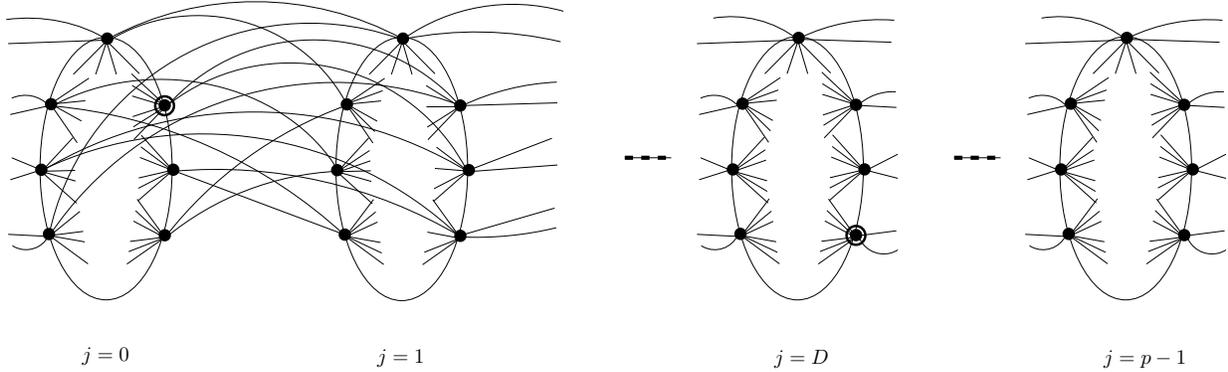}
\caption{ \small{A {$k$-clique-$p$-butterfly} graph where the two agents are located at the marked nodes at a distance $D>\log{k}$.}
} \label{fig:lowerbound2}
\end{figure}

Our method is constructive, i.e., given a deterministic rendezvous algorithm $\mathcal{A}$, we describe a graph and a procedure for the adversary to label the ports in a graph and to choose starting positions that will keep the agents executing $\mathcal{A}$ from meeting for the desired time period.
\begin{definition}
A \emph{$k$-clique-$p$-butterfly} is a $(k+3)$-regular graph on $(k \times p)$ vertices (denoted by $v_{0,0}$, $\ldots$, $v_{k-1,p-1}$), constructed as follows:
\begin{itemize}
\item  for each $0 \le j < p$, we connect all pairs of vertices from $v_{0,j},v_{1,j},\ldots,v_{k-1,j}$ (so they form a $k$-vertex clique, named $j$-th clique),
\item for any given $0 \le j,j' < p$, $0 \le i,i' < k$, we connect $v_{i,j}$ and $v_{i',j'}$ if $j' = (j+1) \bmod p$ and $i' \in \{2i \bmod k,(2i+1) \bmod k\}$.
\end{itemize}
\end{definition}

For two vertices, $v$ from $j$-th clique and $v'$ from $j'$-th clique, we say that their horizontal distance is $\min( (j-j') \bmod p, (j'-j) \bmod p)$. By the properties of the butterfly-type interconnections between the cliques, we observe that, if for two vertices the horizontal distance is at least $\log k$, then their actual distance in the graph is equal to the horizontal distance. See Figure~\ref{fig:lowerbound2} for an example of a \emph{$k$-clique-$p$-butterfly} graph where the agents are at a distance $D>\log{k}$.

\begin{theorem}
\label{th:lowerbound}
For any odd $k \ge 3$ and $p \ge 2\cdot\lfloor \log k \rfloor $, given a $k$-clique-$p$-butterfly (denoted further as $G$), for any deterministic algorithm $\mathcal{A}$, and integers $L>0$ and $D \ge \log k = \log (\Delta-3)$, there exist labels $0 \le \labelOfAgent[1],\labelOfAgent[2] < L$ and a port numbering of $G$ such that if two agents with these labels start at two vertices at distance $D$ in this graph, they will remain at distance $D$ for at least $\Theta(\Delta \log L / \log \Delta)$ steps.
\end{theorem}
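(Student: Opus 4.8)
The plan is the standard adversary argument for rendezvous lower bounds, but tuned to this butterfly‑of‑cliques. Given the deterministic algorithm $\mathcal{A}$, I will reveal a port numbering of $G$ and pick the two starting vertices ``on the fly'', so as to keep the two agents at horizontal distance exactly $D$ — equivalently, at graph distance $D$, since $D\ge\log k$ forces horizontal and graph distance to coincide on the pairs that occur — for as long as possible. The symmetry I exploit is the clique shift $\sigma\colon v_{i,j}\mapsto v_{i,(j+D)\bmod p}$: because the butterfly adjacency rule is invariant under $j\mapsto j+1$, $\sigma$ is a graph automorphism of $G$; it maps the start of agent~$1$ (say $v_{0,0}$) to the start of agent~$2$ ($v_{0,D}$); and for every vertex $u$, $\sigma(u)$ is at horizontal distance $D$ from $u$, so $\dist{u}{\sigma(u)}=D$ in the only regime where $G$ has a pair at distance $D$ at all (here the hypotheses $D\ge\log k$ and $p\ge 2\lfloor\log k\rfloor$ are exactly what guarantee that the construction has room and that distances never collapse below $D$). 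Hence it suffices to maintain a $\sigma$-symmetric port numbering under which the two runs always occupy $\sigma$-related vertices and therefore always read off distance $D$: no rendezvous can happen while this holds.

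The technical core is to use a single $k$-clique as a \emph{trap} that absorbs $\Theta(\Delta)$ rounds while leaking only $O(\log\Delta)$ bits about the label. I will argue that, revealing ports lazily, the adversary can confine an agent to one clique for $\Theta(\Delta)$ consecutive rounds: whenever the agent leaves a vertex through a port whose other endpoint is not yet fixed, the adversary directs that port along a still‑unused edge \emph{inside} the clique and chooses the port through which the agent re‑enters; the adversary is forced to route the agent out of the clique only once the agent has caused all $k-1$ internal edges at its current vertex to be traversed, which costs $\Omega(k)=\Omega(\Delta)$ moves no matter what $\mathcal{A}$ does. Time is then chopped into epochs of $\Theta(\Delta)$ rounds, each epoch confined (per agent) to a single clique, with the two agents in $\sigma$-related cliques throughout. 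At an epoch boundary the adversary expels the agent through one of the four butterfly ports; the doubling map $i\mapsto\{2i,2i+1\}\bmod k$ on column indices — well behaved precisely because $k$ is odd, so that $2$ is invertible mod~$k$ — gives the adversary enough freedom to realise the exit so that the two agents land in $\sigma$-related cliques again and to absorb the cyclic wrap‑around of the $p$ cliques. The only label‑dependent datum the adversary must commit to per epoch is, in essence, which clique‑vertex the agent is expelled from and in which direction: one symbol from an alphabet of size $O(\Delta)$.

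The label size enters through a pigeonhole. Fix the adversary's confinement strategy; crucially it can be made to depend only on the agent's history of chosen ports and not on its label, so running $\mathcal{A}(\ell)$ against it assigns to each label $\ell<L$ a deterministic sequence of epoch‑symbols over an alphabet of size $O(\Delta)$. After $m=\Theta(\log L/\log\Delta)$ epochs there are at most $O(\Delta)^{m}<L$ possible prefixes, so two distinct labels $\ell_1,\ell_2$ produce the same first $m$ epoch‑symbols. For this pair I will verify, by induction on epochs, that the adversary's $\sigma$-symmetric construction keeps the two runs $\sigma$-related (hence at distance $D$) for all $m$ epochs; this yields $T=m\cdot\Theta(\Delta)=\Theta(\Delta\log L/\log\Delta)$ rounds with no meeting, which is the claim. (This ingredient is the new one; it sits alongside the folklore $\Omega(D\Delta)$ bound and the known ring bound $\Omega(D\log L)$, and together they give the ``almost matching'' lower bound quoted in the abstract.)

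The step I expect to be the real obstacle is exactly the epoch bookkeeping of the middle paragraph: showing that the adversary can, consistently and $\sigma$-symmetrically, (i) keep each agent inside one clique for a full $\Theta(\Delta)$-round epoch against \emph{every} strategy of $\mathcal{A}$, (ii) terminate each epoch with a controlled butterfly exit whose only label dependence is an $O(\Delta)$-size symbol, and (iii) maintain the clique offset at exactly $D$ — including when an agent re‑enters a clique whose port numbering is already partly fixed and when the walk wraps around the cycle of $p$ cliques. Note that clique‑level lockstep of the two agents really is necessary (an agent that leaves its clique while the other does not immediately sees distance $D\pm1$), so the delicate point is synchronising the two expulsions in time and direction even though the two labels make different moves; calibrating the epoch length so that it is short enough to guarantee confinement yet long enough to force the full $\Theta(\Delta)$ factor per epoch is the quantitative heart of the argument.
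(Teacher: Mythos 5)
Your proposal shares the paper's information-theoretic skeleton --- roughly $\log L/\log\Delta$ ``informative events,'' each costing $\Theta(\Delta)$ rounds, plus a pigeonhole over labels to find two agents that behave identically --- and you correctly identify that the crux is forcing the two agents to cross clique-to-clique ``bridge'' edges at exactly the same rounds. But the mechanism you propose for this (lazy, adaptive port assignment; confinement ``epochs''; synchronized expulsions) is precisely the step you flag as the obstacle, and it does not go through as described. Once the adversary has committed port assignments at the vertices an agent has visited, the round at which that agent is next forced across a bridge is determined by its own label-specific port choices, not by the adversary; two different labels will in general be expelled at different rounds, and the first unsynchronized bridge crossing changes the oracle's answer from $D$ and ends the argument. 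There is no ``epoch length'' to calibrate, because expulsion is not an event the adversary schedules. Two further problems: the agent cannot be forced to exhaust the $k-1$ internal ports of ``its current vertex'' since it leaves that vertex every round and (nodes being anonymous) cannot be assumed to return; and an adaptively revealed numbering risks leaking information to the agent through the entry ports it observes, which would invalidate the claim that each label induces a fixed symbol sequence.

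The paper closes exactly this gap with three ingredients your proposal lacks. First, the port numbering is fixed \emph{obliviously} and fully symmetrically (the two endpoints of every edge carry ports $j$ and $\Delta+1-j$), so an agent learns nothing from walking and its port sequence $(\mathcal{P}_{\labelOfAgent}(i))$ is a predetermined function of its label alone for as long as the distance reads $D$. Second, an averaging argument (Lemma~\ref{lem:low1}) selects two port values $p_1,p_2$ such that at least half the labels use the four values $\{p_1,p_2,\Delta+1-p_1,\Delta+1-p_2\}$ at most $8t/\Delta$ times in $t$ rounds; only then does the adversary decree that these, at every vertex, are the bridge ports. Third, a counting argument over words in $\{A,B,C\}$ with at most $8t/\Delta$ non-$C$ letters (Lemma~\ref{lem:low22}) shows that for $t = O(\Delta\log L/\log\Delta)$ two distinct labels must produce the \emph{same} forward/backward/stay pattern at every round --- which is the synchronization your epochs were meant to provide. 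I would recommend replacing the adaptive trap construction with this oblivious-numbering-plus-averaging route, or else supplying a concrete proof of your point (ii) that does not presuppose control over when each agent exits its clique.
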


\begin{proof}
The main idea of the proof is as follows. We will choose four special port numbers that will be assigned in every vertex to the ``bridge" edges connecting it to an adjacent clique (two ports to advance to the next clique and two ports to go backward). Note that moving inside the clique doesn't change the distance to the other agent.
By carefully picking the agent labels, we will ensure that both agents will choose the forward ports at the same rounds and the backward ports also at the same rounds, thus they will maintain a constant distance between themselves, for a sufficient amount of time and during this time period, any queries to the distance oracle would be useless.

Before we proceed to prove the theorem, we need to fix some notation and provide some basic lemmas.
We will consider the {$k$-clique-$p$-butterfly} graphs with a special type of port numbering where the ports on the two ends of each edge always form one of the pairs of values: $\{1,\Delta\}, \{2,\Delta-1\}, \ldots, \{\Delta/2,\Delta/2+1\}$. In other words, whenever an agent chooses to leave a node by port $j$, it arrives at the adjacent node by port $\Delta+1-j$. In a $\Delta$-regular graph with such a port numbering, an agent can never distinguish between any two vertices and it can learn nothing new about the graph by traversing it.
Thus, the algorithm $\mathcal{A}$ must choose a predefined sequence of ports to follow during the first $t>0$ rounds when the distance between the agents remains $D$.
For every $0 \le \labelOfAgent < L$ we denote the sequence of ports chosen by algorithm $\mathcal{A}$ (with input label $\labelOfAgent$) by $(\mathcal{P}_{\labelOfAgent}(i))_{i=0}^{i<t}$.

\begin{lemma}
\label{lem:low1}
There exist $p_1,p_2$, $0 < p_1,p_2 \le \Delta/2$, and a set $\mathcal{L} \subset \{0,1,\ldots,L-1\}$, $|\mathcal{L}| \ge L/2$, such that for each $\labelOfAgent \in \mathcal{L}$, at most $8 t /\Delta$ elements of $(\mathcal{P}_{\labelOfAgent}(i))_{i=0}^{i<t}$ are equal to one of the four values $\{p_1,p_2,\Delta+1-p_1,\Delta+1-p_2\}$ (in total).
\end{lemma}

\begin{proof}
In $G$, the port numbers on the edges are paired together as follows: $\{1,\Delta\}, \ldots, \{\Delta/2,\Delta/2+1\}$.
For each pair $\{p,\Delta+1-p\}$, let $S_p$ be the number of pairs $(i,\labelOfAgent)$, $0 \le i < t,0 \le \labelOfAgent < L$, such that  $\mathcal{P}_{\labelOfAgent}(i) \in \{p,\Delta+1-p\} $. Since $S_1+S_2+\cdots+S_{\Delta/2}\le t\cdot L$, there must exist $p_1,p_2$ such that $S_{p_1}+S_{p_2} \le t \cdot L \cdot 2 / (\Delta/2)$=${t \cdot L \cdot 4 / \Delta}$. Thus, there must exist a subset $\mathcal{L}$ containing at least half of the possible labels, such that for each $\labelOfAgent \in \mathcal{L}$, the number of $i$'s for which $\mathcal{P}_{\labelOfAgent}(i) \in \{p_1,p_2,\Delta+1-p_1,\Delta+1-p_2\}$ is not greater than $2 \cdot \frac{t \cdot L \cdot 4 / \Delta}{L} = 8t / \Delta$.
\end{proof}

Given the ports $p_1$ and $p_2$ as in the lemma above, the adversary can assign the pairs of port numbers $(p_1, \Delta+1-p_1)$ and $(p_2, \Delta+1-p_2)$ to those edges of $G$ that connect two adjacent cliques, such that ports $p_1$ and $p_2$ will take an agent forward to the next clique, ports $\Delta+1-p_1$ and $\Delta+1-p_2$ would take an agent backward to the previous clique and any other port will keep an agent within the same clique. We partition the set of port numbers into the subsets:
\begin{eqnarray*}
 A &=& \{p_1,p_2\},  \\
 B &=& \{\Delta+1-p_1,\Delta+1-p_2 \},  \\
 C &=& \{0,1,\dots \Delta\} \setminus (A \cup B).
\end{eqnarray*}
Informally, the ports in $A$ take an agent one step ahead, the ports in $B$ take it one step back, and the $C$ ports keep it in the same clique.

\begin{lemma}
\label{lem:low22}
There exist $0 \le \labelOfAgent[1]<\labelOfAgent[2] < L$ such that $\forall i \leq \lfloor \frac{\log L}{ 2 \log \Delta} \rfloor \cdot \frac\Delta 8$,
\begin{equation} \label{eq:A}
\mathcal{P}_{\labelOfAgent[1]}(i) \in A \text{ iff } \mathcal{P}_{\labelOfAgent[2]}(i) \in A, \text{ and}
\end{equation}
\begin{equation} \label{eq:B}
\mathcal{P}_{\labelOfAgent[1]}(i) \in B \text{ iff } \mathcal{P}_{\labelOfAgent[2]}(i) \in B.
\end{equation}
\end{lemma}
\begin{proof}
Let $t > 0$ be the first round such that for $i=t$, either \eqref{eq:A} or \eqref{eq:B} does not hold for any possible pair of labels $\labelOfAgent[1]$ and $\labelOfAgent[2]$ from the set $\{0,1,\ldots,L-1\}$. We will bound the value of $t$. Note that the sequence of ports $(\mathcal{P}_{\labelOfAgent}(i))_{i=0}^{i<t}$ can be written as a sequence of $A$,$B$ and $C$'s.
We choose the set of labels $\mathcal{L}$ from Lemma~\ref{lem:low1} and for any $\labelOfAgent \in \mathcal{L}$, let us count the number of possible sequences $(\mathcal{P}_{\labelOfAgent}(i))_{i=0}^{i<t}$ which correspond to distinct words from the alphabet $\{A, B, C\}$; let us denote this number as $X$.
Recall that any such sequence can have at most $8t/\Delta$ $A$-ports and $B$-ports in total (cf. Lemma~\ref{lem:low1}).
Assuming for the sake of notation that $T=8t/\Delta$ is an integer and using Stirling's approximation, we get:
$$X = \sum_{i=0}^{\lfloor T\rfloor} \binom{t}{i} \cdot 2^i \le 1+T \cdot \binom{t}{T} \cdot 2^{T}\le 1+T \cdot\frac{(2t)^{T}}{(8t/\Delta)!}$$
$$ \le 1+T\cdot\left(\frac{2t}{8t/\Delta\cdot\frac1e}\right)^{T} = 1 + T\left(\frac{e}{4}\Delta\right)^{T}. $$
Since for any two labels in $\mathcal{L}$ the algorithm chooses distinct sequences, we have $L/2 \leq |\mathcal{L}| \leq X$, which gives us $T \geq \lfloor \frac{\log L}{ 2 \log \Delta} \rfloor$ and thus $t \geq \lfloor \frac{\log L}{ 2 \log \Delta} \rfloor \cdot \frac\Delta 8$.
\end{proof}

We now return to the proof of the Theorem~\ref{th:lowerbound}.
From Lemma~\ref{lem:low22}, we know that there exist two agents with labels $\labelOfAgent[1],\labelOfAgent[2]$ such that these agents, when executing algorithm $\mathcal{A}$, will use the ports going forward or backward simultaneously during the first $t=\lfloor \frac{\log L}{ 2 \log \Delta} \rfloor \cdot \frac\Delta 8$ rounds. Since the distance in the graph is independent of relative positions inside cliques (for $D \ge \log k$), the distance between agents is maintained to be $D$ for at least $t=\Theta(\Delta \log L / \log \Delta)$ steps.
\end{proof}

The results of this section provide a lower bound of $\Omega(\Delta (D + \log L / \log \Delta))$ rounds for rendezvous of distance-aware agents.

\section{Upper Bound and Algorithm for Rendezvous}

In this section we provide 
an algorithm that guarantees rendezvous of two distance-aware agents in $O(\Delta(D + \log l))$ rounds.
The algorithm is divided into four procedures.
For each procedure, first we provide some intuitions on its behavior, then we give its formal description in the form of pseudo-code.
After the descriptions of all procedures, the proof of correctness and analysis of rendezvous time of the algorithm are given.

\medskip
We start with a procedure $\procSingleTest$ (see Algorithm~\ref{proc:singleTest}), which attempts to decrease the distance between the two agents. The input consists of a positive integer $\delta$ and a bit $b\in\{0,1\}$.
Recall that the command $\currentDist$() performs the oracle query and provides the current distance from the other agent.
We also introduce a command $\moveRound{x}$, where $x$ is an integer, which behaves as follows.
If $x\in\{1,\ldots,\degree{v}\}$, where $v$ is the node currently occupied by the executing agent, then $\moveRound{x}$ forces the executing agent to move from $v$ by taking port $x$, and the value returned by the $\moveRound{x}$ command is the entry port at the arrival node.
Otherwise, that is when $x\notin\{1,\ldots,\degree{v}\}$, the executing agent stays idle in the given round, and $\moveRound{x}$ returns $0$.
(We pass $0$ as an argument to deliberately make an agent idle.)
\begin{algorithm} \caption{Procedure $\procSingleTest(\delta,b)$}
\label{proc:singleTest}
\begin{algorithmic}
\REQUIRE Two integers $\delta\geq 1$ and $b\in\{0,1\}$.
\ENSURE $\returnSuccess$ if the distance between agents decreases in some round; $\returnFailure$ otherwise.
  \FOR{$i \leftarrow 1$ \TO $\delta$}
     \STATE $x \leftarrow \currentDist$()
     \STATE $p \leftarrow \moveRound{i\cdot b}$
     \STATE $y \leftarrow \currentDist$()
     \IF{$y<x$}
        \RETURN $\returnSuccess$ \COMMENT{When the distance to the other agent decreased.}
     \ENDIF
     \STATE $\moveRound{p\cdot b}$  \COMMENT{Going back along the same edge.}
  \ENDFOR
  \RETURN $\returnFailure$ \COMMENT{When the distance to the other agent never decreased.}
\end{algorithmic}
\end{algorithm}

The interpretation of the input variable $b$ is that whenever $b=0$, then the agent does not perform any movements during the execution of $\procSingleTest$.
Note that if $b=1$, then $\moveRound{i\cdot b}$ forces an agent to move only if there exists an edge with port $i$ at $v$.

Suppose that, in the same round, both agents perform calls to procedure $\procSingleTest$ with input parameters $\delta_1,b_1$ and $\delta_2,b_2$, respectively. We will always ensure that $\delta=\delta_1=\delta_2$.
Informally, $b_1=b_2=1$ implies that both agents iteratively take ports $1,\ldots,\delta$ (skipping the ones not present at the current node), ending the process if the distance between them decreases.
Clearly, if $b_1=b_2=0$, then both agents just stay idle during the $2\delta$ rounds and the procedure returns $\returnFailure$.
If $b_1\neq b_2$, then one agent stays idle while the other `tests ports'.
If we ensure that $\delta$ exceeds the degree of the node occupied by the agent that performs the movements, then the procedure will return $\returnSuccess$ whenever $b_1\neq b_2$.

\medskip
We now describe a procedure $\procBoundDegrees$ with input variable $b\in\{0,1\}$ (see Algorithm~\ref{proc:BoundDegrees}).
Informally speaking, for each $2^l$ such that $2^l<\degree{v}$, where $v$ is the node occupied at the beginning of the execution of the procedure, the executing agent stays idle for $2^{l+1}$ rounds (this is achieved by the call to $\procSingleTest(2^l,0)$.
This part is independent of $b$.
Then, $\procSingleTest(2^{\lceil\log_2 \degree{v} \rceil},b)$ is called.
If $b=0$, then the agent stays idle for another $2^{\lceil\log_2 \degree{v} \rceil+1}$ rounds.
If $b=1$, then the agent sequentially explores all ports at $v$ (Note that the value of $2^{\lceil\log_2 \degree{v} \rceil}$ exceeds the degree of $v$).
The above process is interrupted whenever the agents observe that the distance between them decreased, and the procedure returns $\returnSuccess$ in that case.
\begin{algorithm} \caption{Procedure $\procBoundDegrees(b)$}
\label{proc:BoundDegrees}
\begin{algorithmic}
\REQUIRE An integer $b\in\{0,1\}$.
\ENSURE $\returnSuccess$ or $\returnFailure$.
  \STATE Let $v$ be the currently occupied node.
  \FOR{$l \leftarrow 0$ \TO $\lceil\log_2\degree{v}\rceil-1$}
     \STATE $s \leftarrow \procSingleTest(2^l,0)$
     \IF{$s=\returnSuccess$}
        \RETURN $\returnSuccess$
     \ENDIF
  \ENDFOR

  \STATE $s \leftarrow \procSingleTest(2^{\lceil\log_2\degree{v}\rceil},b)$
  \RETURN $s$
\end{algorithmic}
\end{algorithm}

The observation given below follows directly from the formulation of procedure $\procBoundDegrees$.
\begin{observation} \label{obs:same-node}
Suppose that agent $\agent_i$ occupies node $v_i$, $i\in\{1,2\}$, and executes procedure $\procBoundDegrees$ at the beginning of round $r$.
If procedure $\procSingleTest$ does not return $\returnSuccess$ in the first $j$ iterations of $\procBoundDegrees$, then:
\begin{enumerate}[label={\normalfont(\roman*)},leftmargin=*]
 \item\label{it:obs:1} $\agent_i$ occupies $v_i$ at the end of the $j$-th iteration of $\procBoundDegrees$,
 \item\label{it:obs:2} both agents end the execution of the $j$-th iteration in round $r+2^{j}-1$.
\end{enumerate}
\end{observation}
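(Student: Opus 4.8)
The plan is to read both parts straight off the pseudocode of $\procBoundDegrees$ and $\procSingleTest$, using the hypothesis as the one fact that pins down the control flow. The hypothesis states that in each of the first $j$ iterations of the main loop of $\procBoundDegrees$ the call $\procSingleTest(2^l,0)$ (for $l=0,\dots,j-1$) does not return $\returnSuccess$. Since a $\procSingleTest$ call can terminate only by returning $\returnSuccess$ (when the distance drops) or by returning $\returnFailure$ once its loop has finished, each of these calls necessarily runs its loop to the very end and returns $\returnFailure$; hence $\procBoundDegrees$ reaches the end of its $j$-th iteration without ever taking the early exit.

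For part~(i), I would simply note that every one of these $j$ calls has second argument $b=0$, so inside the call every occurrence of $\moveRound{i\cdot b}$ and of $\moveRound{p\cdot b}$ is the command $\moveRound{0}$, which by definition only keeps the executing agent idle for the round. Thus throughout the first $j$ iterations $\agent_i$ makes no move at all and in particular still occupies $v_i$ when the $j$-th iteration ends. The hypothesis is needed here only to exclude a premature $\returnSuccess$: even though $\agent_i$ never moves, the \emph{other} agent may move and shrink the distance, which would make some $\procSingleTest(2^l,0)$ --- and therefore $\procBoundDegrees$ --- return $\returnSuccess$ early; ruling this out is exactly what the hypothesis does, so all $j$ iterations run to completion.

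For part~(ii), I would argue that, conditioned on no early exit, the number of rounds consumed by the first $j$ iterations is a fixed function of $j$ alone --- it depends neither on $b$, nor on the degrees $\degree{v_1}$ and $\degree{v_2}$, nor on the positions of the agents. Indeed, when it does not exit early, $\procSingleTest(2^l,0)$ runs its loop for all $2^l$ values of the counter and hence keeps the agent idle for a number of rounds that depends only on $l$, so the corresponding iteration of $\procBoundDegrees$ always has exactly that duration. Since both agents begin $\procBoundDegrees$ at the start of the same round $r$ and, by the hypothesis, pass through the first $j$ iterations with no early exit, they stay perfectly synchronized; summing these per-iteration round counts and adding $r$ then places the end of the $j$-th iteration in round $r+2^{j}-1$ for both agents.

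I do not expect a genuine obstacle --- as the text says, the observation ``follows directly from the formulation of $\procBoundDegrees$.'' The two points worth a moment of care are that the no-success hypothesis must be invoked precisely where a distance decrease forced by the \emph{other} agent could otherwise spoil either the position claim or the round count, and that the duration of $\procSingleTest(2^l,0)$ must be seen not to depend on the current node's degree or on $b$; were it to depend on either, the two agents could drift out of step and part~(ii) would fail.
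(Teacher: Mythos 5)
Your approach is exactly the one the paper intends: the paper offers no proof at all, remarking only that the observation ``follows directly from the formulation of $\procBoundDegrees$,'' and your direct reading of the pseudocode --- $b=0$ forces $\moveRound{0}$ throughout, so the agent never leaves $v_i$; the no-\returnSuccess{} hypothesis is needed only to rule out an early exit caused by the \emph{other} agent's motion; and the duration of each iteration is a fixed function of the iteration index alone, so the two agents remain in lockstep --- is the right way to fill it in. Both of your ``points worth a moment of care'' are well placed.

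The one step you do not actually carry out is the final arithmetic, and it is the one step that does not check out against the literal pseudocode. Each pass of the inner loop of $\procSingleTest$ issues two $\texttt{move}$ commands (the forward probe and the return), hence consumes two rounds, so $\procSingleTest(2^l,0)$ occupies $2\cdot 2^l=2^{l+1}$ rounds when it runs to completion. The first $j$ iterations of $\procBoundDegrees$ (indices $l=0,\dots,j-1$) therefore take $\sum_{l=0}^{j-1}2^{l+1}=2^{j+1}-2$ rounds and end in round $r+2^{j+1}-3$, not $r+2^{j}-1$. You assert the value claimed in the statement without deriving it; had you done the count you would have found the mismatch. The discrepancy is in the paper's stated constant rather than in your reasoning, and it is harmless downstream --- Lemma~\ref{lem:similar-nodes} and Lemma~\ref{lem:CompLabels} only use the fact that both agents finish each iteration in the \emph{same}, position- and degree-independent round --- but a complete proof should either compute the sum explicitly or flag that the claimed round index needs correcting.
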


Denote $I_0=(0,1]$ and $I_j=(2^{j-1},2^{j}]$ for $j\geq 1$.
We say that two nodes $u$ and $v$ are \emph{similar} if there exists $j\geq 0$ such that $\degree{u}\in I_j$ and $\degree{v}\in I_j$.

\begin{lemma} \label{lem:similar-nodes}
Let $r$ be some integer.
Suppose that agent $\agent_i$ is present at $v_i$, $i\in\{1,2\}$, and calls in round $r$ procedure $\procBoundDegrees$ with input value $b_i$.
Then:
\begin{enumerate}[label={\normalfont(\roman*)},leftmargin=*]
 \item\label{it:sim:1} for $b_1=b_2=1$, if both calls to $\procBoundDegrees$ return $\returnFailure$, then the nodes $v_1$ and $v_2$ are similar, and
 \item\label{it:sim:2} if $b_1\neq b_2$ and $v_1$ and $v_2$ are similar, then both calls to $\procBoundDegrees$ return $\returnSuccess$.
\end{enumerate}
\end{lemma}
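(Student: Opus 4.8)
The plan is to decompose each execution of $\procBoundDegrees(b)$ into its \emph{initial part} --- the FOR loop, i.e.\ the calls $\procSingleTest(2^l,0)$ for $l=0,\dots,\lceil\log_2\degree{v}\rceil-1$ --- and its \emph{final call} $\procSingleTest(2^{\lceil\log_2\degree{v}\rceil},b)$, and to exploit two facts. First, throughout the initial part every $\procSingleTest$ has second argument $0$, so the executing agent never moves: by Observation~\ref{obs:same-node} it stays at $v_i$, and whenever two initial parts start in the same round and have the same length the two agents run them in lock-step, stay put, and see the distance remain equal to $\dist{v_1}{v_2}$, so no call returns $\returnSuccess$ during them. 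Second, a final call with $b=1$ satisfies $2^{\lceil\log_2\degree{v}\rceil}\ge\degree{v}$, so the executing agent probes \emph{every} port $1,\dots,\degree{v}$ of $v$, returning to $v$ after each probe that does not decrease the distance. Since the $l$-th piece $\procSingleTest(2^l,\cdot)$ of any call lasts a number of rounds proportional to $2^l$ independently of its second argument, two calls begun in the same round traverse corresponding pieces synchronously; in particular two \emph{similar} nodes give the same value of $\lceil\log_2\degree{\cdot}\rceil$ and hence identically timed decompositions, whereas if $\degree{v_1}\in I_j$, $\degree{v_2}\in I_{j'}$ with $j<j'$ then the initial part plus final call of the agent at $v_1$ occupies exactly the rounds spanned by the first $j+1$ FOR-iterations of the agent at $v_2$. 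Finally, as $\procBoundDegrees$ is invoked only while the agents have not met, $v_1\ne v_2$; writing $D'=\dist{v_1}{v_2}\ge1$, connectivity of $G$ provides, at each agent's node $v_i$, a port $i^\ast\in\{1,\dots,\degree{v_i}\}$ (the first edge of a shortest path toward the other agent's position) such that moving from $v_i$ through $i^\ast$ reaches a node at distance $D'-1$ from the other agent.

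For part~\ref{it:sim:2}, let $b_1\ne b_2$, say $b_1=0$ and $b_2=1$, with $v_1,v_2$ similar, so the two calls have identical phase lengths. During the common initial part both agents idle (at $v_1$, resp.\ $v_2$), so neither returns $\returnSuccess$ and they enter their final calls in the same round. Thereafter $\agent_1$ idles at $v_1$ while $\agent_2$ runs $\procSingleTest(2^{\lceil\log_2\degree{v_2}\rceil},1)$, probing ports $1,2,\dots$ of $v_2$ in lock-step with $\agent_1$'s empty iterations. As long as no success has occurred $\agent_2$ is back at $v_2$ when each iteration starts, so both agents read $x=D'$, and after $\agent_2$ probes port $i$ and lands on a neighbour $w$ of $v_2$ both read $y=\dist{v_1}{w}$; hence the test ``$y<x$'' has the same outcome for both agents in every iteration. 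By the time $\agent_2$ reaches port $i^\ast$ (it does, since $i^\ast\le\degree{v_2}$) the distance has strictly dropped in some iteration, and at that iteration both tests --- hence both calls --- return $\returnSuccess$. The case $b_1=1$, $b_2=0$ is symmetric.

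For part~\ref{it:sim:1}, let $b_1=b_2=1$ and suppose, for contradiction, that $v_1,v_2$ are not similar, say $\lceil\log_2\degree{v_1}\rceil=j<j'=\lceil\log_2\degree{v_2}\rceil$. The FOR loops of the two agents coincide on iterations $l=0,\dots,j-1$ (identical idle calls $\procSingleTest(2^l,0)$), after which $\agent_1$ runs its final call $\procSingleTest(2^{j},1)$ during exactly the rounds in which $\agent_2$ runs its next FOR-iteration $\procSingleTest(2^{j},0)$ (same parameter $\delta=2^j$). Up to this point both agents idled and saw no distance change, so neither has returned $\returnSuccess$; from here we are in the situation of part~\ref{it:sim:2}, with $\agent_1$ probing all ports of $v_1$ while $\agent_2$ idles at $v_2$. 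Probing $i^\ast$ reveals a distance drop, so $\agent_1$'s call returns $\returnSuccess$, contradicting the hypothesis that both calls return $\returnFailure$. Hence $j=j'$, i.e.\ $v_1$ and $v_2$ are similar; the case $j'<j$ is symmetric.

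The step I expect to be the main obstacle is the synchronisation bookkeeping behind the first paragraph: one must verify that corresponding pieces of the two executions line up round-for-round (this uses Observation~\ref{obs:same-node} together with the lengths $2^l$ of the pieces), and that inside $\procSingleTest$ the ``go back along the same edge'' step genuinely returns the probing agent to its node and that the comparison ``$y<x$'' is affected only by the forward move, never by the return move. Granting these, the first iteration whose forward move decreases the distance does so for both agents in the same iteration, and both parts of the lemma reduce to the elementary fact that an agent probing every port of its node must traverse the first edge of a shortest path to the other, stationary, agent, thereby witnessing a decrease of the distance.
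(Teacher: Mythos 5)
Your proof is correct and follows essentially the same route as the paper's: decompose $\procBoundDegrees$ into the idle `for'-loop phase and the final call to $\procSingleTest$, use Observation~\ref{obs:same-node} to keep the two executions synchronized, and reduce part~\ref{it:sim:1} (by contraposition) to the one-agent-probes-while-the-other-idles situation of part~\ref{it:sim:2}. The only difference is that you spell out details the paper leaves implicit — the existence of the shortest-path port $i^\ast$ forcing a distance decrease, and the agreement of the two agents' ``$y<x$'' tests — which is a faithful elaboration rather than a different argument.
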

\begin{proof}
We first prove \ref{it:sim:2}.
Suppose without loss of generality that $b_1=1$ and $b_2=0$.
Since the nodes $v_1$ and $v_2$ are similar, $x=\lceil\log_2\degree{v_1}\rceil=\lceil\log_2\degree{v_2}\rceil$ and therefore the executions of procedure $\procBoundDegrees$ have the same number of iterations of the `for' loop.
If, in one of those iterations, the execution of procedure $\procBoundDegrees$ ends, then \ref{it:sim:2} holds and hence suppose that this is not the case.
Thus, the calls to $\procSingleTest(2^{x},b_1)$ and $\procSingleTest(2^{x},b_2)$ are made by the agents.
Moreover, by Observation~\ref{obs:same-node}, both calls are made in the same round and when agent $\agent_i$ is at $v_i$, $i\in\{1,2\}$.
During these calls, $\agent_2$ stays idle during $2^{x+1}$ rounds (because $b_2=0$) while $\agent_1$ explores all ports at $v_1$ during the same $2^{x+1}$ rounds (because $b_1=1$ and $2^{\lceil\log_2\degree{v_1}\rceil}\geq \degree{v_1}$).
This guarantees that the latter calls to $\procSingleTest$ return $\returnSuccess$, which completes the proof of \ref{it:sim:2}.

We now prove \ref{it:sim:1}.
Let $b_1=b_2=1$.
We argue that if the nodes $v_1$ and $v_2$ are not similar, then the call to $\procBoundDegrees$ results in returning $\returnSuccess$.
Suppose without loss of generality that $\degree{v_i}\in I_{j_i}$, $i\in\{1,2\}$, where $j_1<j_2$.
The number of iterations of the `for' loop of procedure $\procBoundDegrees$ executed by $\agent_i$ is $\lceil\log_2\degree{v_i}\rceil=j_i$, $i\in\{1,2\}$.
Thus, by Observation~\ref{obs:same-node}\ref{it:obs:1}, $\agent_i$ occupies $v_i$ at the end of $j_1$-th iteration of $\procBoundDegrees$ for each $i\in\{1,2\}$.
Moreover, by Observation~\ref{obs:same-node}\ref{it:obs:2}, after finishing the execution of the `for' loop, $\agent_1$ calls $\procSingleTest(2^{j_1},b_1)$ while $\agent_2$ calls $\procSingleTest(2^{j_1},0)$ in the $(j_1+1)$-st iteration of the `for' loop of $\procBoundDegrees$.
Also, both of the above-mentioned calls to $\procSingleTest$ are made in the same round $r'$.
By similar arguments as when proving \ref{it:sim:2}, we obtain that condition \ref{it:sim:1} holds.
\end{proof}

\medskip
Note that if procedure $\procBoundDegrees$ returns $\returnSuccess$, then the agents get closer, and we can repeat the same process for at their current locations.
However, for some nodes procedure $\procBoundDegrees$ may return $\returnFailure$ and then procedure $\procCompareLabels$ described below (see Algorithm~\ref{proc:CompareLabels}) helps to break the symmetry.
\begin{algorithm} \caption{Procedure $\procCompareLabels$}
\label{proc:CompareLabels}
\begin{algorithmic}
\REQUIRE None.
\ENSURE $\labelsBit{\labelExtended{\labelOfAgent}}{i}$ such that $i$ is distinguishing for the extended labels of the agents.
  \STATE Let $\labelExtended{\labelOfAgent}$ be the extended label of the executing agent.
  \FOR{$i \leftarrow 1$ \TO $\lceil\log_2\labelExtended{\labelOfAgent}\rceil$}
      \STATE $s \leftarrow \procBoundDegrees(\labelsBit{\labelExtended{\labelOfAgent}}{i})$
      \IF{$s=\returnSuccess$}
         \RETURN $\labelsBit{\labelExtended{\labelOfAgent}}{i}$
      \ENDIF
  \ENDFOR
\end{algorithmic}
\end{algorithm}
Procedure $\procCompareLabels$ uses the notion of extended labels.
The \emph{extended label} $\labelExtended{\labelOfAgent}$ of a label $\labelOfAgent$ is an integer whose $j$-th bit is defined as follows:
\begin{equation*}
\labelsBit{\labelExtended{\labelOfAgent}}{j}=
   \begin{cases}
        \labelsBit{\labelOfAgent}{\lceil j/2\rceil}, & \textup{for }j\in\{1,3,5,\ldots,2\lceil\log_2\labelOfAgent\rceil-1\}, \\
        1,                                           & \textup{for }j=2\lceil\log_2\labelOfAgent\rceil, \\
        0,                                           & \textup{otherwise}.

   \end{cases}
\end{equation*}
The index $j=2\lceil\log_2\labelOfAgent\rceil$ is called the \emph{terminating bit} of $\labelExtended{\labelOfAgent}$ (this is the last bit set to $1$).
Informally, the odd positions of the extended label are the the bits of $\labelOfAgent$ while the even positions are all zeros, except for the terminating bit.
We say that an index $i$ is \emph{distinguishing} for two extended labels $\labelExtended{\labelOfAgent}$ and $\labelExtended{\labelOfAgent'}$ if $\labelsBit{\labelExtended{\labelOfAgent}}{i}\neq\labelsBit{\labelExtended{\labelOfAgent'}}{i}$.

Informally, procedure $\procCompareLabels$ iterates over the bits of the extended label of the executing agent in order to find an index $i$ that is distinguishing for the extended labels of the two agents.
The construction of extended labels guarantees that there exists a distinguishing index $i$ not greater than the smaller length of the two extended labels and hence $\procBoundDegrees$ returns $\returnSuccess$ at the latest in the $i$-th iteration of the `for' loop of procedure $\procCompareLabels$ (the formal proof is given later; we remark here that if we used the label instead of the extended label, then the number of iterations of the `for loop' would have to be equal to the length of the greater label to ensure rendezvous).

We postpone the analysis of procedure $\procCompareLabels$ (given in Lemma~\ref{lem:CompLabels}) as it depends on the context at which it is called by the main procedure.

\medskip
We finally describe the main procedure $\procRV$ (see Algorithm~\ref{proc:RV}).
\begin{algorithm} \caption{Procedure $\procRV$}
\label{proc:RV}
\begin{algorithmic}
  \STATE $s \leftarrow \returnSuccess$
  \WHILE{rendezvous not achieved \AND $s=\returnSuccess$}
     \STATE\label{line:RV:b1} $s \leftarrow \procBoundDegrees(1)$
  \ENDWHILE
  \STATE $b \leftarrow \procCompareLabels$
  \WHILE{rendezvous not achieved}
     \STATE $\procBoundDegrees(b)$
  \ENDWHILE
\end{algorithmic}
\end{algorithm}
We start with its intuitive description.
The first `while' loop iteratively calls procedure $\procBoundDegrees(1)$ as long as its execution gets the agents closer to each other.
If a call to $\procBoundDegrees(1)$ does not achieve that, then (as we formally prove later) the agents observed the same distance between each other while both explored all ports at their respective locations.
This is significant as both agents learn that they occupy nodes whose degrees are in the same interval $I_j$ for some $j\geq 0$.
In other words, the agents learn an asymptotically tight upper bound on both degrees.
Then, procedure $\procCompareLabels$ is called and uses the above fact as well as the labels of the agents to break the symmetry that occurs at the current agents' nodes.
Note that $\procCompareLabels$ returns either $0$ or $1$ and in this case different values are returned for both agents (see Lemma~\ref{lem:CompLabels} below).
Thus, the agent whose execution of $\procCompareLabels$ returned $0$ stays idle from now on.
The other agents continues making calls to $\procBoundDegrees(1)$ and since each execution results in exploring all ports at the currently occupied node, each execution gets the agent one step closer to the one that is idle.
We also remark that the respective calls to procedure $\procBoundDegrees$ in the second `while' loop of $\procCompareLabels$ are not necessarily `synced', that is, the $j$-th of those calls can be made in different rounds by the agents.
This, however, is not important as one of the agents stays idle and the other one performs appropriate movements.

\medskip
Lemma~\ref{lem:CompLabels} analyzes the only call to procedure $\procCompareLabels$ made by procedure $\procRV$.
Then, Theorem~\ref{thm:upper-bound} provides the upper bound on the rendezvous time for distance-aware agents in arbitrary networks.

\begin{lemma} \label{lem:CompLabels}
Whenever procedure $\procCompareLabels$ is called by both agents during the execution of procedure $\procRV$,
both agents finish the execution of $\procCompareLabels$ in the same round and the values returned by $\procCompareLabels$ are different for the two agents.
\end{lemma}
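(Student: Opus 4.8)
The plan is to establish two things about the call to $\procCompareLabels$ inside $\procRV$: first that the two agents invoke it in the same round and remain synchronized throughout, and second that they obtain different return values. For synchronization, I would first argue that the preceding first `while' loop of $\procRV$ terminates in the same round for both agents. Each iteration of that loop is a call to $\procBoundDegrees(1)$; since both agents pass $b=1$, Lemma~\ref{lem:similar-nodes}\ref{it:sim:1} tells us that the loop can only terminate (with $\returnFailure$) when the two currently occupied nodes $v_1,v_2$ are similar, i.e.\ $\degree{v_1},\degree{v_2}\in I_j$ for a common $j$. I also need that up to that point the two agents stay in lockstep: as long as every call to $\procSingleTest$ returns $\returnFailure$, Observation~\ref{obs:same-node} gives that each iteration of every $\procBoundDegrees$ call ends in a round determined solely by its starting round and the iteration index, so if both agents begin an iteration in the same round they finish it in the same round. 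A success in $\procSingleTest$ means the distance decreased, which also happens simultaneously for both agents (distances are symmetric), so the first `while' loop exits for both agents in exactly the same round $r$, with both agents at similar nodes, and then both call $\procCompareLabels$ in round $r$.

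Next I would analyze $\procCompareLabels$ itself. Since both agents enter it in round $r$ at similar nodes, and each iteration is a call to $\procBoundDegrees(b)$ with the iteration-dependent bit of the extended label, I again invoke Observation~\ref{obs:same-node} to keep the agents synced across iterations as long as no $\procSingleTest$ returns $\returnSuccess$. The key combinatorial point is about the extended labels: because $\labelExtended{\labelOfAgent}$ interleaves the bits of $\labelOfAgent$ with zeros and places a terminating $1$ at position $2\lceil\log_2\labelOfAgent\rceil$, two extended labels of distinct labels always have a distinguishing index $i$ that is at most the length of the \emph{shorter} extended label — either the two labels differ in some common-length bit, or one extended label is shorter and its terminating bit falls at a position where the longer one is $0$ (here the parity trick matters: the terminating bit sits at an even position, which in the longer extended label is either $0$ or its own terminating bit, and two terminating bits at the same even position would force the labels to have equal length, contradiction; so in the unequal-length case the distinguishing index is the shorter terminating position). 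Let $i^\*$ be the smallest distinguishing index; then for $i<i^\*$ both agents pass the same bit to $\procBoundDegrees$, so (being at similar nodes and synchronized) either some earlier iteration already returned $\returnSuccess$ and halted both — contradicting that rendezvous was not yet achieved and that the call continues — or all iterations $i<i^\*$ return $\returnFailure$ and leave the agents in lockstep at similar nodes. At iteration $i^\*$ the agents pass \emph{different} bits, i.e.\ $b_1\neq b_2$, so by Lemma~\ref{lem:similar-nodes}\ref{it:sim:2} both calls to $\procBoundDegrees$ return $\returnSuccess$ in the same round, and $\procCompareLabels$ returns $\labelsBit{\labelExtended{\labelOfAgent}}{i^\*}$, which is $0$ for one agent and $1$ for the other.

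There is a subtlety to address about earlier iterations $i<i^\*$ returning $\returnSuccess$: if that happens the distance decreased and the agents are no longer at the controlled configuration, so I must rule it out or handle it. The resolution is that Lemma~\ref{lem:similar-nodes}\ref{it:sim:1} is an `if failure then similar' statement and does not by itself force failure; however, the hypothesis of the lemma we are proving is precisely that $\procCompareLabels$ is called and (implicitly) run to completion returning a bit, so I would phrase the argument as: on the round an iteration with $b_1=b_2$ returns $\returnSuccess$, the return value is not yet produced because such an iteration with equal bits can only return $\returnSuccess$ if the agents were \emph{not} at similar nodes at its start; but inductively they are at similar nodes, so failure, so the loop reaches $i^\*$. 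I would fold this into a clean induction on $i$: the invariant is ``before iteration $i$ both agents are synchronized and at mutually similar nodes,'' maintained because for $i<i^\*$ equal bits plus similar nodes forces $\returnFailure$ by the contrapositive of Lemma~\ref{lem:similar-nodes}\ref{it:sim:2} together with Lemma~\ref{lem:similar-nodes}\ref{it:sim:1}, and Observation~\ref{obs:same-node} preserves synchronization. The main obstacle I anticipate is making the extended-label parity argument airtight — precisely handling the unequal-length case and confirming that the distinguishing index never exceeds the shorter extended label's length — since everything else follows fairly mechanically from the observation and lemma already in hand.
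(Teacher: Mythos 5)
Your overall architecture matches the paper's proof: synchronization via Observation~\ref{obs:same-node}, similarity of the occupied nodes via Lemma~\ref{lem:similar-nodes}\ref{it:sim:1}, the bound on the minimum distinguishing index by the length of the shorter extended label via the terminating-bit/parity construction, and $\returnSuccess$ at the distinguishing index via Lemma~\ref{lem:similar-nodes}\ref{it:sim:2}. The extended-label argument you give is correct and is essentially the paper's.

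However, there is a genuine gap in the step you yourself flag as the subtlety: ruling out an early $\returnSuccess$ in an iteration $i<i^*$ where both agents pass the bit $1$. You claim that ``such an iteration with equal bits can only return $\returnSuccess$ if the agents were \emph{not} at similar nodes at its start,'' attributing this to a contrapositive of Lemma~\ref{lem:similar-nodes}. Neither part of that lemma yields this. Part \ref{it:sim:1} says that failure (with $b_1=b_2=1$) implies similarity; its contrapositive says non-similar nodes imply success, not that similar nodes imply failure. Part \ref{it:sim:2} concerns $b_1\neq b_2$ and is silent here. And the claim is false in general: two agents at similar nodes both running $\procBoundDegrees(1)$ can perfectly well see the distance drop (e.g.\ both take port $1$ and both happen to step toward each other inside $\procSingleTest$), producing $\returnSuccess$ with equal bits. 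If that happened at some $i<i^*$, both agents would return the \emph{same} bit, contradicting the lemma, so this case must genuinely be excluded, not deemed impossible by similarity alone. The paper's argument is different and is the missing ingredient: by Observation~\ref{obs:same-node} the agents are still at the very same nodes $v_1,v_2$ at which the \emph{last} call to $\procBoundDegrees(1)$ in the first `while' loop of $\procRV$ was made and returned $\returnFailure$; since the algorithm is deterministic and the configuration is unchanged, an iteration with both bits equal to $1$ is an exact replay of that failed experiment and therefore returns $\returnFailure$ again (the case of both bits equal to $0$ is trivial, as both agents idle). Your proof needs this determinism-plus-unchanged-configuration argument in place of the invalid contrapositive.
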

\begin{proof}
Since both agents call $\procCompareLabels$, none of the preceding calls to procedure $\procBoundDegrees$ returns $\returnSuccess$ and the agents do not rendezvous prior to the call to $\procCompareLabels$.
By Observation~\ref{obs:same-node} and a simple inductive argument, the calls to procedure $\procCompareLabels$ made by both agents end in the same round $r$.

It remains to prove that the calls to $\procCompareLabels$ return different values.
By Lemma~\ref{lem:similar-nodes}\ref{it:sim:1}, at the beginning of round $r$ the agents $\agent_1$ and $\agent_2$ are, respectively, at two similar nodes $v_1$ and $v_2$.
Let $j\geq 0$ be the minimum distinguishing index for agents' labels, i.e., $\labelsBit{\labelExtended{\labelOfAgent[\agent_1]}}{j}\neq\labelsBit{\labelExtended{\labelOfAgent[\agent_2]}}{j}$ and $\labelsBit{\labelExtended{\labelOfAgent[\agent_1]}}{j'}\neq\labelsBit{\labelExtended{\labelOfAgent[\agent_2]}}{j'}$ for each $1\leq j'<j$.
Such an index $j$ exists because the labels of the agents are different and hence the extended labels have a distinguishing index.
Moreover,
\[j\leq\min\left\{ \lceil\log_2\labelExtended{\labelOfAgent[\agent_1]}\rceil, \lceil\log_2\labelExtended{\labelOfAgent[\agent_2]}\rceil \right\}.\]
Indeed, if the two labels are of the same length, then the extended labels are of the same length.
If, on the other hand, the labels have different lengths, then the terminating bits are at different positions, which in particular implies that the terminating bit of the smaller label is at position that is distinguishing for the two extended labels.

By assumption, Observation~\ref{obs:same-node} and an inductive argument, $\agent_i$ is at $v_i$ at the beginning of the $j'$-th call to procedure $\procBoundDegrees$ during the execution of procedure $\procCompareLabels$, where $j'\leq j$.
The last execution of procedure $\procBoundDegrees$ preceding the call to $\procCompareLabels$ returns $\returnFailure$.
Hence, if $\labelsBit{\labelExtended{\labelOfAgent[\agent_1]}}{j'}=\labelsBit{\labelExtended{\labelOfAgent[\agent_2]}}{j'}=1$, then the $j'$-th call to $\procBoundDegrees$ returns $\returnFailure$.
Clearly, if $\labelsBit{\labelExtended{\labelOfAgent[\agent_1]}}{j'}=\labelsBit{\labelExtended{\labelOfAgent[\agent_2]}}{j'}=0$, then the agents stay idle during the $j'$-th call to procedure $\procBoundDegrees$ which also implies that it returns $\returnFailure$.

Thus, the above proves that the $j$-th call to procedure $\procBoundDegrees$ takes place during the execution of procedure $\procCompareLabels$ and, by Lemma~\ref{lem:similar-nodes}\ref{it:sim:2}, it returns $\returnSuccess$ for both agents.
Thus, procedure $\procCompareLabels$ returns the respective bits of the extended label at position that is distinguishing, which completes the proof.
\end{proof}

\begin{theorem} \label{thm:upper-bound}
Suppose that agent $\agent_i$ with label $\labelOfAgent[\agent_i]$ initially occupies node $v_i$, $i\in\{1,2\}$.
Procedure $\procRV$ guarantees that $\agent_1$ and $\agent_2$ rendezvous within
$O(\Delta\cdot (D + \min_i\{\log\labelOfAgent[\agent_i] \}))$ rounds where $D=\dist{v_1}{v_2}$.
\end{theorem}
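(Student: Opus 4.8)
The plan is to bound the number of rounds consumed by each of the three phases of $\procRV$ separately: the first \textbf{while} loop of successful calls to $\procBoundDegrees(1)$, the single call to $\procCompareLabels$, and the final \textbf{while} loop in which one agent is stationary. The key quantitative fact about $\procBoundDegrees$ is that a single call, when started at a node $v$, lasts at most $\sum_{l=0}^{\lceil\log_2 d(v)\rceil-1} 2^{l+1} + 2^{\lceil\log_2 d(v)\rceil+1} = O(d(v)) = O(\Delta)$ rounds (this follows by inspecting the pseudocode: the $l$-th iteration is a call to $\procSingleTest(2^l,0)$, which takes $2\cdot 2^l$ rounds, and the terminating call to $\procSingleTest(2^{\lceil\log_2 d(v)\rceil},b)$ takes $2\cdot 2^{\lceil\log_2 d(v)\rceil}$ rounds). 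I would state this as a preliminary observation before analysing the phases.

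First I would handle the first \textbf{while} loop. By Lemma~\ref{lem:similar-nodes}\ref{it:sim:1}, whenever a call to $\procBoundDegrees(1)$ made simultaneously by both agents returns $\returnFailure$, the loop terminates; so every call except possibly the last returns $\returnSuccess$, and by the semantics of $\procSingleTest$ a $\returnSuccess$ means that during that call the observed distance strictly decreased in some round. Crucially, a $\returnSuccess$ in $\procBoundDegrees$ happens at the very moment the distance drops, and at that instant the executing agent has just traversed an edge to a node strictly closer to the other agent (the distance went from $x$ to $y<x$, and since each move changes the distance by at most one, $y=x-1$); because both agents were at rest except for the one edge traversal that triggered $\returnSuccess$, the inter-agent distance actually decreases by exactly $1$ per successful call. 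Hence there are at most $D$ successful calls, plus one final failed call, so the first phase costs $O(\Delta)\cdot(D+1) = O(\Delta D)$ rounds, and when it ends both agents sit at similar nodes, in the same round, having never rendezvoused (the loop also exits on rendezvous, in which case we are already done).

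Next, the call to $\procCompareLabels$: by Lemma~\ref{lem:CompLabels} it is entered by both agents in the same round, runs for at most $\lceil\log_2\xi(\ell)\rceil$ iterations of its \textbf{for} loop — each iteration being a call to $\procBoundDegrees$ costing $O(\Delta)$ rounds — where $\xi(\ell)$ is the extended label of the executing agent; since the loop stops at the minimum distinguishing index $j\le\min_i\lceil\log_2\xi(\ell(\agent_i))\rceil$, and $\lceil\log_2\xi(\ell)\rceil = \Theta(\log\ell)$ by the definition of the extended label (it roughly doubles the bit-length), this phase costs $O(\Delta\log l)$ rounds where $l=\min_i\ell(\agent_i)$. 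Again by Lemma~\ref{lem:CompLabels} the two agents obtain different return bits, so exactly one of them — the one returning $0$ — becomes permanently idle in the final loop, while the other repeatedly calls $\procBoundDegrees(1)$. Here I would invoke Lemma~\ref{lem:similar-nodes}\ref{it:sim:2} together with the structure of $\procBoundDegrees$ with $b=1$ against a resting partner: each such call explores all ports at the moving agent's current node, so one of them leads strictly closer to the idle agent and triggers $\returnSuccess$, decreasing the distance by $1$; there can be at most (distance at the start of this phase) such calls. That starting distance is at most $D$ plus the bounded increases possibly caused during $\procCompareLabels$, but since that middle phase lasted only $O(\Delta\log l)$ rounds the distance there grew by at most $O(\Delta\log l)$, so the final phase costs $O(\Delta)\cdot O(D+\Delta\log l)$ rounds. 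Summing, the total is $O(\Delta D + \Delta^2\log l)$, which is weaker than claimed; to recover the sharp $O(\Delta(D+\log l))$ bound I would instead argue that the distance is never increased by more than it was previously decreased — i.e. track a potential — so the total number of distance-decreasing rounds over the whole execution is $O(D)$, independently of how long the middle phase runs.

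The main obstacle, therefore, is the bookkeeping in the last paragraph: showing that the cumulative distance increase over the entire run (in particular during $\procCompareLabels$, where agents move without any guarantee of decreasing the distance) is $O(D)$, so that the final homing phase needs only $O(\Delta D)$ extra rounds rather than $O(\Delta)\cdot(\text{time spent so far})$. The clean way to do this is to observe that an agent traverses an edge at most a bounded number of times in any single $\procBoundDegrees$ call and that each "test" edge traversal is immediately undone by the "going back" step unless it triggered $\returnSuccess$; so the net displacement of each agent per call is at most one edge, and it is one edge \emph{toward} the partner whenever the call succeeds. Consequently the distance changes by at most $1$ per call, and it strictly decreases on every successful call; since the first and third phases together produce at most $D + (\text{number of calls in the middle phase})$ net decreases and the middle phase has only $O(\log l)$ calls, the total number of calls across all phases is $O(D+\log l)$, each of cost $O(\Delta)$, giving the claimed $O(\Delta(D+\log l))$. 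I would close by noting that rendezvous is indeed achieved: the third phase cannot run forever because the distance is a nonnegative integer that strictly decreases on each of its (successful, by Lemma~\ref{lem:similar-nodes}\ref{it:sim:2}) calls, so it reaches $0$, i.e. the agents meet.
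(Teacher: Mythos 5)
Your proposal is correct and follows essentially the same route as the paper's proof: each call to $\procBoundDegrees$ costs $O(\Delta)$ rounds, the calls made directly by $\procRV$ are charged against the strictly decreasing distance (at most $D$ successes plus one failure), and the $O(\log l)$ iterations of $\procCompareLabels$ are charged separately, with Lemmas~\ref{lem:similar-nodes} and~\ref{lem:CompLabels} doing the same work in both arguments. The one place you go beyond the paper---checking via the net-displacement argument that the distance entering the final phase is still $O(D+\log l)$ despite the movements inside $\procCompareLabels$---addresses a point the paper's write-up silently assumes, and your patch is valid (in fact the synced failed calls during $\procCompareLabels$ return both agents to their starting nodes, so the distance never increases there at all); your only imprecision, that a successful call decreases the distance by \emph{exactly} one, is harmless since when both agents move in the same round the drop can be two, which only helps the bound.
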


\begin{proof}
We first prove that the execution of procedure $\procRV$ guarantees rendezvous.
If the agents rendezvous during the execution of the first `for' loop, then the claim follows and hence suppose that this is not the case.
Denote by $b_i$ the value of the variable $b$ returned by the call to procedure $\procCompareLabels$ by agent $\agent_i$, $i\in\{1,2\}$.
By Lemma~\ref{lem:CompLabels}, $b_1\neq b_2$.
Let without loss of generality, $b_1=0$ and $b_2=1$.
This implies that the agent $\agent_1$ stays idle indefinitely.
The agent $\agent_2$, during execution of procedure $\procBoundDegrees(b_2)$ called in the second `for' loop of procedure $\procRV$, explores all ports of the currently occupied node.
Thus, the distance between agents decreases in some round which implies that each such call to procedure $\procBoundDegrees(b_2)$ returns $\returnSuccess$.
Thus, the agents rendezvous eventually.

Now we analyze the rendezvous time.
Each call to $\procBoundDegrees$ takes $O(\Delta)$ rounds.
Moreover, each such call, except for at most one, made directly by procedure $\procRV$ ensures that the distance between the agents decreases.
This follows immediately for calls to $\procBoundDegrees$ preceding the call to procedure $\procCompareLabels$ since those calls, possibly except for the last one, return $\returnSuccess$.
As for the remaining calls to $\procBoundDegrees$, the above claim is due to the fact that the input values are different for both agents due to Lemma~\ref{lem:CompLabels}.
Thus, the total number of rounds due to all calls to $\procBoundDegrees$ made directly by procedure $\procRV$ is $O(\Delta\cdot\dist{v_1}{v_2})$.
The number of iterations of the `for' loop of procedure $\procCompareLabels$ is $O(\min\{\log\labelOfAgent[\agent_1],\log\labelOfAgent[\agent_2]\})$, each resulting in $O(\Delta)$ rounds (the call to $\procBoundDegrees$).
\end{proof}

\section{Conclusions}

This paper presented a new model for mobile agent computation by providing the agents with the capability of measuring distances to each other (or  detecting changes in distances) at each step. We show that this simple mechanism allows us to reduce the time to rendezvous from exponential to polynomial in the degree of the graph. Assuming that such a distance measuring device is available to the agents, one could ask what other problems can be solved more easily using this additional capability. For example, the agents could use this mechanism for communication at distance by moving back and forth, when there are no other means of communication. This opens up a new area of research which is worth investigating.


\end{document}